\def\cl{{C}\!\ell}
\def\mod{{\rm \;mod\; }}
\newcommand{\BR}{\mathbb{R}}
\newcommand{\BC}{\mathbb{C}}
\newcommand{\Mat}{{\rm Mat}}
\newcommand{\Det}{{\rm Det}}
\newcommand{\tr}{{\rm tr}}
\newcommand{\rank}{{\rm rank}}
\newcommand{\diag}{{\rm diag}}
\newcommand{\Adj}{{\rm Adj}}
\def\cl{\mathcal {G}}
\newcommand{\U}{{\rm U}}
\newcommand{\T}{{\rm T}}
\newtheorem{example}{Example}
\newcommand\BibTeX{{\rmfamily B\kern-.05em \textsc{i\kern-.025em b}\kern-.08em
T\kern-.1667em\lower.7ex\hbox{E}\kern-.125emX}}
\begin{document}

\title{On Rank of Multivectors in Geometric Algebras\protect\thanks{The article was prepared within the framework of the project “Mirror Laboratories” HSE University “Quaternions, geometric algebras and applications”.}}

\author[1,2]{Dmitry Shirokov*}

\authormark{DMITRY SHIROKOV}

\address[1]{ \orgname{HSE University}, \orgaddress{\state{Moscow}, \country{Russia}}}

\address[2]{ \orgname{Institute for Information Transmission Problems of Russian Academy of Sciences}, \orgaddress{\state{Moscow}, \country{Russia}}}

\corres{Dmitry Shirokov. \email{dm.shirokov@gmail.com}}

\presentaddress{HSE University, 101000, Moscow, Russia}

\abstract[Abstract]{We introduce the notion of rank of multivector in Clifford geometric algebras of arbitrary dimension without using the corresponding matrix representations and using only geometric algebra operations. We use the concepts of characteristic polynomial in geometric algebras and the method of SVD. The results can be used in various applications of geometric algebras in computer science, engineering, and physics.}

\keywords{characteristic polynomial; Clifford algebra; geometric algebra; rank; singular value decomposition; unitary group}

\jnlcitation{\cname{%
\author{D. Shirokov}} 
(\cyear{2024}), 
\ctitle{On Rank of Multivectors in Geometric Algebras}
}

\maketitle

\section{Introduction}

The notion of rank of matrix is one of the most important concepts of the matrix theory, which is used in different applications -- data analysis, physics, engineering, control theory, computer sciences, etc.

The Clifford geometric algebras can be regarded as unified language of mathematics \cite{ABS, Porteous, Helm}, physics \cite{Hestenes, Doran, BT, Snygg}, engineering \cite{Bayro2}, and computer science \cite{Dorst, Bayro1,hitzer2}. The Clifford geometric algebras are isomorphic to the classical matrix algebras. In particular, the complexified Clifford geometric algebras $\cl^\BC_{p,q}:=\BC\otimes \cl_{p,q}$ are isomorphic to the following complex matrix algebras:
\begin{eqnarray}
\cl^\BC_{p,q}\simeq
\begin{cases}
    \Mat(2^{\frac{n}{2}}, \BC), &\mbox{if $n$ is even,}\\
    \Mat(2^{\frac{n-1}{2}}, \BC)\oplus\Mat(2^{\frac{n-1}{2}}, \BC), &\mbox{if $n$ is odd.}
\end{cases}
\end{eqnarray}
An arbitrary element $M\in\cl^\BC_{p,q}$ (a multivector) can be represented as a complex matrix of the corresponding size
$$N:=2^{[\frac{n+1}{2}]},$$
where square brackets mean taking the integer part. In the case of odd $n$, we deal with block-diagonal matrices with two nonzero blocks of the same size $2^{\frac{n-1}{2}}$.

In this regard, the problem arises of determining the rank of multivectors $M\in\cl^\BC_{p,q}$ without using the matrix representation and using only the operations in Clifford geometric algebras. In this paper, we solve this problem in the case of any dimension. To do this, we use our previous results on SVD and characteristic polynomial in Clifford geometric algebras. Theorems \ref{thrankpr}, \ref{thrankpr2}, \ref{thrank}, \ref{thrankherm} are new. New explicit formulas (\ref{exp1}), (\ref{exp2}) for the cases of dimensions $3$ and $4$ can be used in various applications of geometric algebras in physics, engineering, and computer science.

The paper is organized as follows. In Section \ref{secGA}, we discuss real and complexified geometric algebras (GA) and introduce the necessary notation. In Section \ref{secbeta}, we discuss an operation of Hermitian conjugation in GA, introduce a positive scalar product, a norm, unitary spaces, and unitary groups in GA. Also we discuss faithful representations of GA and present an explicit form on one of them. In Section \ref{secSVD}, we discuss singular value decomposition of multivectors in GA. In Section \ref{secDet}, we discuss a realization of the determinant and other characteristic polynomial coefficients in GA. In Section \ref{secRank}, we introduce the notion of rank of multivector in GA and prove a number of properties of this notion. We prove that this notion does not depend on the choosing of matrix  representation and present another equivalent definition of this notion using only GA operations. Examples for cases of small dimensions are presented. In Section \ref{secRankherm}, we consider the special case of normal multivectors, for which the rank can be determined more simply. The conclusions follow in Section \ref{secConcl}.

\section{Real and Complexified Geometric Algebras}\label{secGA}

Let us consider the real Clifford geometric algebra $\cl_{p,q}$ \cite{Hestenes,Lounesto,Doran,Bulg} with the identity element $e\equiv 1$ and the generators $e_a$, $a=1, 2, \ldots, n$, where $n=p+q\geq 1$. The generators satisfy the conditions
$$
e_a e_b+e_b e_a=2\eta_{ab}e,\qquad \eta=(\eta_{ab})=\diag(\underbrace{1, \ldots , 1}_p, \underbrace{-1, \ldots, -1}_{q}).
$$
Consider the subspaces $\cl^k_{p,q}$ of grades $k=0, 1, \ldots, n$, which elements are linear combinations of the basis elements $e_A=e_{a_1 a_2 \ldots a_k}=e_{a_1}e_{a_2}\cdots e_{a_k}$, $1 \leq a_1<a_2<\cdots< a_k \leq n$, with ordered multi-indices of length $k$. An arbitrary element (multivector) $M\in\cl_{p,q}$ has the form
$$
M=\sum_A m_A e_A\in\cl_{p,q},\qquad m_A\in\BR,
$$
where we have a sum over all multi-indices of length from $0$ to $n$.
The projection of $M$ onto the subspace $\cl^k_{p,q}$ is denoted by $\langle M \rangle_k$.

The grade involution and reversion of a multivector $M\in\cl_{p,q}$ are denoted by 
\begin{eqnarray}
\widehat{M}=\sum_{k=0}^n(-1)^{k}\langle M \rangle_k,\qquad 
\widetilde{M}=\sum_{k=0}^n (-1)^{\frac{k(k-1)}{2}} \langle M \rangle_k.
\end{eqnarray}
We have
\begin{eqnarray}
\widehat{M_1 M_2}=\widehat{M_1} \widehat{M_2},\qquad \widetilde{M_1 M_2}=\widetilde{M_2} \widetilde{M_1},\qquad \forall M_1, M_2\in\cl_{p,q}.\label{invol}
\end{eqnarray}

Let us consider the complexified Clifford geometric algebra $\cl_{p,q}^\BC:=\BC\otimes\cl_{p,q}$ \cite{Bulg}.
An arbitrary element of $M\in\cl^\BC_{p,q}$ has the form
$$
M=\sum_A m_A e_A\in\cl^\BC_{p,q},\qquad m_A\in\BC.
$$
Note that $\cl^\BC_{p,q}$ has the following basis of $2^{n+1}$ elements:
\begin{eqnarray}
e, ie, e_1, ie_1, e_2, i e_2, \ldots, e_{1\ldots n}, i e_{1\ldots n}.\label{basisC}
\end{eqnarray}

In addition to the grade involution and reversion, we use the operation of complex conjugation, which takes complex conjugation only from the coordinates $m_A$ and does not change the basis elements $e_A$ of $\cl_{p,q}$:
$$
\overline{M}=\sum_A \overline{m}_A e_A\in\cl^\BC_{p,q},\qquad m_A\in\BC,\qquad M\in\cl^\BC_{p,q}.
$$
We have
$$
\overline{M_1 M_2}=\overline{M_1}\,\, \overline{M_2},\qquad \forall M_1, M_2\in\cl^\BC_{p,q}.
$$

\section{Hermitian conjugation and unitary groups in Geometric Algebras}\label{secbeta}

Let us consider an operation of Hermitian conjugation $\dagger$ in $\cl^\BC_{p,q}$  (see \cite{unitary,Bulg}):
\begin{eqnarray}
M^\dagger:=M|_{e_A \to (e_A)^{-1},\,\, m_A \to \overline{m}_A}=\sum_A \overline{m}_A (e_A)^{-1}.\label{herm}
\end{eqnarray}
We have the following two equivalent definitions of this operation:
\begin{eqnarray}
&&M^\dagger=\begin{cases}
e_{1\ldots p} \overline{\widetilde{M}}e_{1\ldots p}^{-1}, & \mbox{if $p$ is odd,}\\
e_{1\ldots p} \overline{\widetilde{\widehat{M}}}e_{1\ldots p}^{-1}, & \mbox{if $p$ is even,}\\
\end{cases}\\
&&M^\dagger=
\begin{cases}
e_{p+1\ldots n} \overline{\widetilde{M}}e_{p+1\ldots n}^{-1}, & \mbox{if $q$ is even,}\\
e_{p+1\ldots n} \overline{\widetilde{\widehat{M}}}e_{p+1\ldots n}^{-1}, & \mbox{if $q$ is odd.}\\
\end{cases}
\end{eqnarray}
The operation\footnote{Compare with the well-known operation $M_1 * M_2:=\langle \widetilde{M_1} M_2 \rangle_0$ in the real geometric algebra $\cl_{p,q}$, which is positive definite only in the case of signature $(p,q)=(n,0)$.} 
$$(M_1, M_2):=\langle M_1^\dagger M_2 \rangle_0$$
is a (positive definite) scalar product with the properties
\begin{eqnarray}
&&(M_1, M_2)=\overline{(M_2, M_1)},\\
&&(M_1+M_2, M_3)=(M_1, M_3)+(M_2, M_3),\quad (M_1, \lambda M_2)=\lambda (M_1, M_2),\\
&&(M, M)\geq 0,\quad (M, M)=0 \Longleftrightarrow M=0.\label{||M||}
\end{eqnarray}
Using this scalar product we introduce an inner product space over the field of complex numbers (a unitary space) in $\cl^\BC_{p,q}$.

We have a norm 
\begin{eqnarray}
\|M\|:=\sqrt{(M,M)}=\sqrt{\langle M^\dagger M \rangle_0}.\label{norm}
\end{eqnarray}

Let us consider the following faithful representation (isomorphism) of the complexified geometric algebra
\begin{eqnarray}
\beta:\cl^\BC_{p,q}\quad \to\quad
\begin{cases}
    \Mat(2^{\frac{n}{2}}, \BC), &\mbox{if $n$ is even,}\\
    \Mat(2^{\frac{n-1}{2}}, \BC)\oplus\Mat(2^{\frac{n-1}{2}}, \BC), &\mbox{if $n$ is odd.}
\end{cases}\label{isom}
\end{eqnarray}
Let us denote the size of the corresponding matrices by
$$N:=2^{[\frac{n+1}{2}]},$$
where square brackets mean taking the integer part.

Let us present an explicit form of one of these representations of $\cl^\BC_{p,q}$ (we use it also for $\cl_{p,q}$ in \cite{det} and for $\cl^\BC_{p,q}$ in \cite{LMA}). We denote this fixed representation by $\beta'$. Let us consider the case $p = n$, $q = 0$. To obtain the matrix representation for another signature with $q\neq 0$, we should multiply matrices $\beta'(e_a)$, $a = p + 1, \ldots, n$ by imaginary unit $i$. For the identity element, we always use the identity matrix $\beta'(e)=I_N$ of the corresponding dimension $N$. We always take $\beta'(e_{a_1 a_2 \ldots a_k}) = \beta' (e_{a_1}) \beta' (e_{a_2}) \cdots \beta'(e_{a_k})$. In the case $n=1$, we take $\beta'(e_1)=\diag(1, -1)$. Suppose we know $\beta'_a:=\beta'(e_a)$, $a = 1, \ldots, n$ for some fixed odd $n = 2k + 1$. Then for $n = 2k + 2$, we take
the same $\beta'(e_a)$, $a = 1, \ldots , 2k + 1$, and 
$$\beta'(e_{2k+2})=\left(
    \begin{array}{cc}
      0 & I_{\frac{N}{2}}  \\
      I_{\frac{N}{2}} & 0 
    \end{array}
  \right).$$
For $n = 2k + 3$, we take
$$\beta'(e_{a})= \left(\begin{array}{cc}
      \beta'_a & 0 \\
      0 & -\beta'_a
    \end{array}
  \right),\qquad a=1, \ldots, 2k+2,$$ 
  and 
  $$\beta'(e_{2k+3})=\left(\begin{array}{cc}
      i^{k+1}\beta'_1\cdots \beta'_{2k+2} & 0 \\
      0 & -i^{k+1}\beta'_1\cdots \beta'_{2k+2} 
    \end{array}
  \right).$$
  This recursive method gives us an explicit form of the matrix representation $\beta'$ for all $n$.

Note that for this matrix representation we have
$$
(\beta'(e_a))^\dagger=\eta_{aa} \beta'(e_a),\qquad a=1, \ldots, n,
$$
where $\dagger$ is the Hermitian transpose of a matrix. Using the linearity, we get that Hermitian conjugation of a matrix is consistent with Hermitian conjugation of the corresponding multivector:
\begin{eqnarray}
\beta'(M^\dagger)=(\beta'(M))^\dagger,\qquad M\in\cl^\BC_{p,q}.\label{sogl}
\end{eqnarray}
Note that the same is not true for an arbitrary matrix representations $\beta$ of the form (\ref{isom}). It is only true for matrix representations $\gamma=T^{-1}\beta' T$ obtained from $\beta'$ using matrices $T$ such that $T^\dagger T= I$.

Let us consider the group
\begin{eqnarray}
\U\cl^\BC_{p,q}=\{M\in \cl^\BC_{p,q}: M^\dagger M=e\},
\end{eqnarray}
which we call a unitary group in $\cl^\BC_{p,q}$. Note that all the basis elements $e_A$ of $\cl_{p,q}$ belong to this group by definition.

Using (\ref{isom}) and (\ref{sogl}), we get the following isomorphisms to the classical matrix unitary groups:
\begin{eqnarray}
\U\cl^\BC_{p,q}\simeq\begin{cases}
    \U(2^{\frac{n}{2}}), &\mbox{if $n$ is even,}\\
    \U(2^{\frac{n-1}{2}})\times\U(2^{\frac{n-1}{2}}), &\mbox{if $n$ is odd,}
\end{cases}\label{isgr}
\end{eqnarray}
where
\begin{eqnarray}
\U(k)=\{A\in\Mat(k, \BC),\quad A^\dagger A=I\}.
\end{eqnarray}

\section{Singular Value Decomposition in Geometric Algebras}\label{secSVD}

The method of singular value decomposition was discovered independently by E. Beltrami in 1873 \cite{Beltrami} and C. Jordan in 1874 \cite{Jordan1,Jordan2}.

We have the following well-known theorem on singular value decomposition of an arbitrary complex matrix \cite{For,Van}. For an arbitrary $A\in\BC^{n\times m}$, there exist matrices $U\in \U(n)$ and $V\in\U(m)$ such that
\begin{eqnarray}
 A=U\Sigma V^\dagger,\label{SVD}
\end{eqnarray}
where
$$
\Sigma=\diag(\lambda_1, \lambda_2, \ldots, \lambda_k),\qquad k=\min(n, m),\qquad \BR\ni\lambda_1, \lambda_2, \ldots, \lambda_k\geq 0.
$$
Note that choosing matrices $U\in \U(n)$ and $V\in\U(m)$, we can always arrange diagonal elements of the matrix $\Sigma$
in decreasing order $\lambda_1\geq \lambda_2 \geq \cdots \geq \lambda_k\geq 0$.

Diagonal elements of the matrix $\Sigma$ are called singular values, they are square roots of eigenvalues of the matrices $A A^\dagger$ or $A^\dagger A$. Columns of the matrices $U$ and $V$ are eigenvectors of the matrices $A A^\dagger$ and $A^\dagger A$ respectively.

\begin{theorem}[SVD in GA]\cite{SVDENGAGE, SVDAACA}\label{th1} For an arbitrary multivector $M\in\cl^\BC_{p,q}$, there exist multivectors $U, V\in \U\cl^\BC_{p,q}$, where
$$
\U\cl^\BC_{p,q}=\{U\in \cl^\BC_{p,q}: U^\dagger U=e\},\qquad U^\dagger:=\sum_A \overline{u}_A (e_A)^{-1},
$$
such that
\begin{eqnarray}
M=U\Sigma V^\dagger,\label{SVDMC}
\end{eqnarray}
where multivector $\Sigma$ belongs to the subspace $K\in\cl^\BC_{p,q}$, which is a real span of a set of $N=2^{[\frac{n+1}{2}]}$ fixed basis elements  (\ref{basisC}) of $\cl^\BC_{p,q}$ including the identity element~$e$.
\end{theorem}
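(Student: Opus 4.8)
The plan is to transfer the classical matrix singular value decomposition (\ref{SVD}) to $\cl^\BC_{p,q}$ through the fixed faithful representation $\beta'$, and then to recognise the resulting diagonal factor as an element of a coordinate subspace spanned by $N$ of the basis elements (\ref{basisC}). First I would apply (\ref{SVD}) to $\beta'(M)$: for even $n$ this is a single matrix in $\Mat(N,\BC)$, while for odd $n$ it is block-diagonal with two blocks of size $N/2$, to each of which (\ref{SVD}) is applied separately. In both cases one obtains $\beta'(M)=A\,\Lambda\,B^\dagger$, where $A,B$ are unitary of the corresponding type (block-diagonal in the odd case) and $\Lambda=\diag(\lambda_1,\dots,\lambda_N)$ is a real diagonal $N\times N$ matrix with $\lambda_1\geq\cdots\geq\lambda_N\geq0$. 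Putting $U:=(\beta')^{-1}(A)$, $V:=(\beta')^{-1}(B)$, $\Sigma:=(\beta')^{-1}(\Lambda)$ and using that $\beta'$ is an algebra isomorphism compatible with Hermitian conjugation, (\ref{sogl}), so that $\beta'(U^\dagger)=A^\dagger$ and $\beta'(V^\dagger)=B^\dagger$, one gets $M=U\Sigma V^\dagger$ together with $\beta'(U^\dagger U)=A^\dagger A=I_N=\beta'(e)$, hence $U^\dagger U=e$ and likewise $V^\dagger V=e$; thus $U,V\in\U\cl^\BC_{p,q}$.

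The remaining step, which I expect to be the only genuinely nontrivial one, is to produce $N$ basis elements (\ref{basisC}), one of them being $e$, whose real span $K$ contains every $\Sigma$ obtained in this way --- equivalently, to show that $(\beta')^{-1}$ carries the space of real diagonal $N\times N$ matrices onto a coordinate subspace of $\cl^\BC_{p,q}$ relative to the basis (\ref{basisC}). For this I would first establish, by induction on $n$ along the recursive definition of $\beta'$, that every matrix $\beta'(e_A)$ is a \emph{generalized permutation matrix}: it has exactly one nonzero entry in each row and in each column, and that entry lies in $\{1,-1,i,-i\}$. Indeed, the base matrix $\diag(1,-1)$, the off-diagonal block matrix $\beta'(e_{2k+2})$, the block-diagonal matrices $\diag(\beta'_a,-\beta'_a)$, and the matrix $\beta'(e_{2k+3})$ built from $\pm i^{k+1}\beta'_1\cdots\beta'_{2k+2}$ all belong to this class, which is closed under matrix products and under multiplication by $i$ (the latter handling the passage from the signature $(n,0)$ to an arbitrary $(p,q)$); hence so is every $\beta'(e_A)=\beta'(e_{a_1})\cdots\beta'(e_{a_k})$. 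In particular $\beta'(e_A)$ is diagonal precisely when its underlying permutation is the identity.

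Now, since $\beta'$ is an isomorphism, the $2^n$ matrices $\beta'(e_A)$ form a complex basis of the matrix algebra; the multi-indices $A$ for which $\beta'(e_A)$ is diagonal form a set closed under symmetric difference, and from the explicit form of $\beta'$ one verifies that there are exactly $N$ of them, so that these $\beta'(e_A)$ span the space of diagonal matrices. For such an $A$ one has $\beta'(e_A)^2=\beta'(e_A^2)=\pm I_N$, so in $\beta'(e_A)=\diag(\varepsilon_1,\dots,\varepsilon_N)$ all the squares $\varepsilon_j^2$ coincide; hence either all $\varepsilon_j\in\{1,-1\}$ (and $\beta'(e_A)$ is real diagonal) or all $\varepsilon_j\in\{i,-i\}$ (and $\beta'(ie_A)$ is real diagonal). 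In either case exactly one of the two basis elements $e_A,ie_A$ of (\ref{basisC}) has a real diagonal image; collecting these $N$ elements gives a set $S$ of $N$ basis elements from (\ref{basisC}) with $e\in S$ (since $\beta'(e)=I_N$) for which $\beta'(S)$ --- $N$ matrices, linearly independent over $\BR$ because the $\beta'(e_A)$ are linearly independent over $\BC$ --- is a basis of the real vector space of real diagonal $N\times N$ matrices. Taking $K:=\spn_\BR S$, we get $\Lambda\in\beta'(K)$, hence $\Sigma=(\beta')^{-1}(\Lambda)\in K$, with coordinates $\lambda_1,\dots,\lambda_N$ relative to $S$. The reduction to matrix SVD and the verification that $U,V$ are unitary multivectors are routine transfers through $\beta'$, so the identification of $S$ (equivalently $K$) as a coordinate subspace is the crux of the proof.
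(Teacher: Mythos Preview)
The paper does not actually prove this theorem here; it is quoted from \cite{SVDAACA} and stated without proof. So there is no in-paper proof to compare against. That said, your approach is precisely the natural one and is almost certainly the argument given in the cited reference: push $M$ through the fixed representation $\beta'$, apply the classical matrix SVD (blockwise in the odd-$n$ case so that the unitary factors stay in the block-diagonal subalgebra), and pull back using the compatibility (\ref{sogl}) of $\beta'$ with Hermitian conjugation to conclude that $U,V\in\U\cl^\BC_{p,q}$.

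Your treatment of the subspace $K$ is more detailed than the paper needs and is essentially correct. One small point deserves a cleaner justification: the claim that exactly $N$ of the matrices $\beta'(e_A)$ are diagonal. Rather than saying ``from the explicit form of $\beta'$ one verifies'', observe that a generalized permutation matrix whose underlying permutation is not the identity has \emph{all} diagonal entries equal to zero; hence in any expansion $\sum_A c_A\,\beta'(e_A)$ only the diagonal $\beta'(e_A)$ contribute to the diagonal part. Since $\beta'$ is an isomorphism onto the full matrix algebra (or the block-diagonal algebra in the odd case), the diagonal $\beta'(e_A)$ must span the $N$-dimensional space of diagonal matrices, so there are at least $N$ of them; linear independence inside that $N$-dimensional space gives at most $N$, hence exactly $N$. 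With that filled in, your identification of $K$ as the real span of the corresponding $N$ elements of (\ref{basisC}) is complete.
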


\section{Determinant and other characteristic polynomial coefficients in Geometric Algebras}\label{secDet}

Let us consider the concept of determinant \cite{rudn,acus} and characteristic polynomial \cite{det} in geometric algebra. Explicit formulas for characteristic polynomial coefficients are discussed in \cite{Abd,Abd2}, applications to Sylvester equation are discussed in \cite{Sylv,Sylv2}, the relation with the noncommutative Vieta theorem is discussed in \cite{Vieta1,Vieta2}, applications  to calculation of elementary functions in geometric algebras are discussed in \cite{Acus}.

We can introduce the notion of determinant
$$\Det(M):=\det(\beta(M))\in\BR,\qquad M\in\cl^\BC_{p,q},$$
where $\beta$ is (\ref{isom}), and the notion of characteristic polynomial
\begin{eqnarray}
&&\varphi_M(\lambda):=\Det(\lambda e-M)=\lambda^N-C_{(1)}\lambda^{N-1}-\cdots-C_{(N-1)}\lambda-C_{(N)}\in\cl^0_{p,q}\equiv\BR,\nonumber\\
&&M\in\cl^\BC_{p,q},\quad N=2^{[\frac{n+1}{2}]},\quad C_{(k)}=C_{(k)}(M)\in\cl^0_{p,q}\equiv\BR,\quad k=1, \ldots, N.\label{char}
\end{eqnarray}
 The following method based on the Faddeev--LeVerrier algorithm allows us to recursively obtain basis-free formulas for all the characteristic coefficients $C_{(k)}$, $k=1, \ldots, N$ (\ref{char}):
\begin{eqnarray}
&&M_{(1)}:=M,\qquad M_{(k+1)}=M(M_{(k)}-C_{(k)}),\label{FL0}\\
&&C_{(k)}:=\frac{N}{k}\langle M_{(k)} \rangle_0,\qquad k=1, \ldots, N.
\label{FL}\end{eqnarray}
In particular, we have
\begin{eqnarray}
C_{(1)}=N \langle M \rangle_0=\tr(\beta(M)).
\end{eqnarray}
In this method, we obtain high coefficients from the lowest ones. The determinant is minus the last coefficient
\begin{eqnarray}
\Det(M)=-C_{(N)}=-M_{(N)}=U(C_{(N-1)}-M_{(N-1)})\label{laststep}
\end{eqnarray}
and has the properties (see \cite{rudn,det})
\begin{eqnarray}
&&\Det(M_1 M_2)=\Det(M_1) \Det (M_2),\qquad M_1, M_2\in\cl^\BC_{p,q},\label{detpr}\\
&&\Det(M)=\Det(\widehat{M})=\Det(\widetilde{M})=\Det(\overline{M})=\Det(M^\dagger),\qquad \forall M\in\cl^\BC_{p,q}.\label{detpr2}
\end{eqnarray}
The inverse of a multivector $M\in\cl^\BC_{p,q}$ can be computed as
\begin{eqnarray}
M^{-1}=\frac{\Adj(M)}{\Det(M)}=\frac{C_{(N-1)}-M_{(N-1)}}{\Det(M)},\qquad \Det(M)\neq 0.\label{inv}
\end{eqnarray}

The presented algorithm and formulas (\ref{FL0}), (\ref{FL}), (\ref{inv}) are actively used to calculate inverse in GA \cite{inv1,inv2,inv3}. See also \cite{hitzer1}.

\section{Rank in Geometric Algebras}\label{secRank}

Let us introduce the notion of rank of a multivector $M\in\cl^\BC_{p,q}$:
\begin{eqnarray}
\rank(M):=\rank(\beta(M))\in\{0, 1, \ldots, N\},\label{rank}
\end{eqnarray}
where $\beta$ is (\ref{isom}). 

Below we present another equivalent definition, which does not depend on the matrix representation $\beta$ (Theorem \ref{thrank}). We use the fact that rank is the number of nonzero singular values in the SVD and Vieta formulas.

\begin{lemma}\label{lemmawell} The rank of multivector $\rank(M)$ (\ref{rank}) is well-defined, i.e. it does not depend on the representation $\beta$ (\ref{isom}).
\end{lemma}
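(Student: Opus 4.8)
The plan is to exploit the structure of $\cl^\BC_{p,q}$ as a matrix algebra (or a sum of two matrix algebras) together with the fact that matrix rank is a similarity invariant. By (\ref{isom}), $\cl^\BC_{p,q}$ is a semisimple $\BC$-algebra with one simple component $\Mat(N,\BC)$ when $n$ is even and two simple components $\Mat(N/2,\BC)$ when $n$ is odd, and any $\beta$ in (\ref{isom}) is by assumption an algebra isomorphism onto the indicated codomain. I would split the argument into these two cases.

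For even $n$ the algebra is simple, so given two isomorphisms $\beta_1,\beta_2\colon\cl^\BC_{p,q}\to\Mat(N,\BC)$ the composite $\beta_2\circ\beta_1^{-1}$ is an automorphism of $\Mat(N,\BC)$, hence inner by the Noether--Skolem theorem: $\beta_2(M)=T^{-1}\beta_1(M)T$ for a fixed $T\in{\rm GL}(N,\BC)$ and all $M$. Since $\rank(T^{-1}AT)=\rank(A)$ for every matrix $A$ and every invertible $T$, we get $\rank(\beta_1(M))=\rank(\beta_2(M))$, so the value (\ref{rank}) is independent of $\beta$ in this case.

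For odd $n$ I would first note that a suitable rescaling of $e_{1\ldots n}$ by $1$ or $i$ produces a central element $\omega\in\cl^\BC_{p,q}$ with $\omega^2=e$, so $\frac{1}{2}(e+\omega)$ and $\frac{1}{2}(e-\omega)$ are the two complementary central primitive idempotents and they realize the decomposition of $\cl^\BC_{p,q}$ into its two simple blocks. Any algebra isomorphism $\beta$ as in (\ref{isom}) must carry this pair of idempotents to the pair $\{\diag(I_{N/2},0),\diag(0,I_{N/2})\}$ in one of the two possible orders (these are the only nonzero complementary central idempotents of $\Mat(N/2,\BC)\oplus\Mat(N/2,\BC)$), and restrict on each block to an isomorphism $\Mat(N/2,\BC)\to\Mat(N/2,\BC)$. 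Hence for two such $\beta_1,\beta_2$ the composite $\beta_2\circ\beta_1^{-1}$ is an automorphism of $\Mat(N/2,\BC)\oplus\Mat(N/2,\BC)$, i.e.\ a block-diagonal conjugation possibly followed by an interchange of the two blocks. Writing $\beta_1(M)=\diag(A,B)$, the block interchange turns it into $\diag(B,A)$ and the conjugations act within each block; in every case $\rank(\beta_2(M))=\rank(A)+\rank(B)=\rank(\beta_1(M))$, which finishes the odd case.

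The only delicate point is the odd-dimensional case: one has to justify that a faithful representation of the prescribed size $N$ is forced to be the direct sum of the two \emph{inequivalent} irreducible representations, each occurring exactly once, so that two such representations can differ only by an inner automorphism of each component together with a possible swap of the two components — after which invariance of rank under similarity and under the block swap closes the argument. The even case is essentially immediate from Noether--Skolem.
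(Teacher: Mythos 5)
Your proof is correct, and it reaches the conclusion by a genuinely different route from the paper. The paper argues at the level of generators via the Pauli theorem: any two representations of type (\ref{isom}) satisfy $\beta(e_a)=\pm T^{-1}\beta'(e_a)T$ for the fixed representation $\beta'$; the even case gives similarity at once, and in the odd case the minus sign is absorbed into the grade involution, reducing the issue to $\rank(\beta'(\widehat{M}))=\rank(\beta'(M))$, which is then checked from the explicit block form of $\beta'$ (even part $\diag(A,A)$, odd part $\diag(B,-B)$), exhibiting $\beta'(\widehat{M})$ as $\beta'(M)$ with its two diagonal blocks interchanged. You instead work at the level of the algebra: Noether--Skolem for the simple (even) case, and, for the odd case, the classification of automorphisms of $\Mat(N/2,\BC)\oplus\Mat(N/2,\BC)$ via the central primitive idempotents, yielding a blockwise conjugation possibly composed with the swap of the two factors. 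Both arguments bottom out in the same two facts --- invariance of matrix rank under conjugation and under interchanging the two blocks --- but yours is representation-independent and needs nothing about the explicit recursive construction of $\beta'$, whereas the paper's version has the side benefit of identifying the sign ambiguity with the grade involution, a fact it reuses later (in the proof of Theorem \ref{thrankpr2}) to conclude $\rank(M)=\rank(\widehat{M})$ for odd $n$. The ``delicate point'' you flag in the odd case is genuinely closed by your idempotent argument, since the paper defines $\beta$ as an isomorphism onto the stated codomain, so no gap remains.
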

\begin{proof} In the case of even $n$, for an
arbitrary representation $\beta$ of type (\ref{isom}), by the Pauli theorem \cite{Pauli}, there exists $T$ such that $\beta(e_a)=T^{-1}\beta'(e_a) T$, where $\beta'$ is fixed matrix representation from Section \ref{secbeta}. We get $\beta(M)=T^{-1}\beta'(M) T$ and $\rank(\beta(M))=\rank(\beta'(M))$.

In the case of odd $n$, for an
arbitrary representation $\beta$ of type (\ref{isom}), by the Pauli theorem \cite{Pauli}, there exists $T$ such that $\beta(e_a)=T^{-1}\beta'(e_a) T$ or $\beta(e_a)=-T^{-1}\beta'(e_a) T$. In the first case, we get  $\rank(\beta(M))=\rank(\beta'(M))$ similarly to the case of even $n$. In the second case, we get $\beta(M)=T^{-1}\beta'(\widehat{M}) T$ and $\rank(\beta(M))=\rank(\beta'(\widehat{M}))$. The equality $\rank(\beta'(\widehat{M}))=\rank(\beta'(M))$ is verified using the explicit form of representation $\beta'$ from Section \ref{secbeta}. Namely, the matrices $\beta'(e_a)=\diag(\beta'_a, -\beta'_a)$, $a=1, \ldots, n$, are block-diagonal matrices with two blocks differing in sign on the main diagonal by construction. Thus the matrix $\beta'(e_{ab})=\beta'(e_a)\beta'(e_b)=\diag(\beta'_a \beta'_b, \beta'_a \beta'_b)$ has two identical blocks. We conclude that the even part of multivector $M$ has the matrix representation $\diag(A, A)$ with two identical blocks, and the odd part of multivector $M$ has the matrix representation $\diag(B, -B)$ with two blocks differing in sign. Finally, we obtain $\rank(\beta'(\widehat{M})=\rank(\diag(A-B, A+B))=\rank(\diag(A+B, A-B))=\rank(\beta'(M))$.  
\end{proof}

\begin{theorem}\label{thrankpr}
We have the following properties of the rank of arbitrary multivectors $M_1, M_2, M_3\in\cl^\BC_{p,q}$:
\begin{eqnarray}
&&\rank(M_1 U)=\rank(U M_1)=\rank (M_1),\qquad \forall \,\,\mbox{invertible}\,\,U\in\cl^\BC_{p,q},\\
&&\rank(M_1 M_2)\leq \min(\rank(M_1), \rank(M_2)),\\
&&\rank(M_1 M_2)+\rank(M_2 M_3)\leq \rank(M_1 M_2 M_3)+\rank(M_2),\\
&&\rank(M_1 )+\rank(M_3)\leq \rank(M_1 M_3)+N.
\end{eqnarray}
\end{theorem}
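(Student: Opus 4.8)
The plan is to reduce all four inequalities to classical facts about the rank of complex matrices, transported through the isomorphism $\beta$ of (\ref{isom}). By Lemma~\ref{lemmawell} the integer $\rank(M)$ does not depend on the choice of $\beta$, so we may fix any representation of type (\ref{isom}). The essential points are that $\beta$ is an algebra isomorphism, so $\beta(M_1 M_2)=\beta(M_1)\beta(M_2)$ for all $M_1, M_2\in\cl^\BC_{p,q}$, and that $U\in\cl^\BC_{p,q}$ is invertible if and only if $\Det(U)=\det(\beta(U))\neq 0$, i.e. if and only if the matrix $\beta(U)$ is invertible (in the odd-$n$ case, each of the two diagonal blocks of $\beta(U)$ is invertible).

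First I would treat even $n$, where $\beta(M)\in\Mat(N,\BC)$. Then: (i) $\rank(M_1 U)=\rank(\beta(M_1)\beta(U))=\rank(\beta(M_1))=\rank(M_1)$, and likewise $\rank(UM_1)=\rank(M_1)$, since right or left multiplication by an invertible matrix preserves rank; (ii) $\rank(M_1 M_2)=\rank(\beta(M_1)\beta(M_2))\leq\min(\rank\beta(M_1),\rank\beta(M_2))$ is submultiplicativity of matrix rank; (iii) the third inequality is the Frobenius rank inequality $\rank(AB)+\rank(BC)\leq\rank(ABC)+\rank(B)$ applied to $A=\beta(M_1)$, $B=\beta(M_2)$, $C=\beta(M_3)$; (iv) the fourth is the Sylvester rank inequality $\rank(A)+\rank(B)\leq\rank(AB)+N$ for $N\times N$ matrices, applied to $A=\beta(M_1)$, $B=\beta(M_3)$. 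For odd $n$ one writes $\beta(M)=(A,A')\in\Mat(N/2,\BC)\oplus\Mat(N/2,\BC)$ and $\rank(M)=\rank A+\rank A'$; each of the four matrix facts holds separately in each block (with $N$ replaced by $N/2$ in Sylvester's inequality), and adding the two block versions reproduces exactly the stated inequalities with $N$. Invertibility of $U$ means both blocks of $\beta(U)$ are invertible, so (i) again goes through blockwise.

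I do not expect a genuine obstacle: the whole statement is a direct translation of standard linear algebra through the isomorphism $\beta$. The only point requiring a little care is the odd-$n$ case, where one must observe that $\rank$ is the sum of the ranks of the two blocks and that each classical inequality is applied blockwise — in particular, that summing the two copies of Sylvester's inequality, each carrying the bound $N/2$, produces the bound $N$ in the last inequality; and that the Frobenius and Sylvester inequalities being invoked are themselves standard and may simply be cited.
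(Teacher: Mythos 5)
Your proposal is correct and follows exactly the paper's approach: the paper's entire proof is the one-line remark that these properties are corollaries of the corresponding properties of matrix rank, transported through the representation $\beta$. You simply supply the details (naming the Frobenius and Sylvester inequalities and handling the odd-$n$ block-diagonal case explicitly), all of which are sound.
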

\begin{proof} These properties are the corollary of the corresponding properties of rank of matrices.
\end{proof}

\begin{theorem}\label{thrankpr2} We have
\begin{eqnarray}
\rank(M)=\rank(\widehat{M})=\rank(\widetilde{M})=\rank(\overline{M})=\rank(M^\dagger)=\rank(M^\dagger M)=\rank(M M^\dagger),\qquad \forall M\in\cl^\BC_{p,q}.
\end{eqnarray}
\end{theorem}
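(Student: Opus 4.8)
The plan is to transfer every equality to a statement about ranks of complex matrices, working with the fixed representation $\beta'$ of Section~\ref{secbeta} (which satisfies (\ref{sogl})) and reading Lemma~\ref{lemmawell} in the following way: since $\rank$ may be evaluated through \emph{any} isomorphism of type (\ref{isom}), it suffices, for each of the operations $\widehat{M}$, $\widetilde{M}$, $\overline{M}$, to exhibit the map $M\mapsto\beta'(\widehat M)$ (resp.\ $M\mapsto\beta'(\widetilde M)$, $M\mapsto\beta'(\overline M)$), possibly post-composed with matrix transposition or with entrywise complex conjugation --- both of which preserve matrix rank --- as yet another isomorphism of type (\ref{isom}); the desired equality then follows with no further work.

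I would start with the two statements about $\dagger$. By (\ref{sogl}) we have $\beta'(M^\dagger)=(\beta'(M))^\dagger$, so $\rank(M^\dagger)=\rank((\beta'(M))^\dagger)=\rank(\beta'(M))=\rank(M)$, since Hermitian transposition preserves the rank of a matrix; and $\rank(M^\dagger M)=\rank((\beta'(M))^\dagger\beta'(M))=\rank(\beta'(M))=\rank(M)$ from the classical identity $\rank(A^\dagger A)=\rank(AA^\dagger)=\rank(A)$, applied in the odd-$n$ case to each of the two diagonal blocks; similarly $\rank(MM^\dagger)=\rank(M)$.

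For grade involution, reversion and complex conjugation I would check that the three maps $M\mapsto\beta'(\widehat M)$, $M\mapsto(\beta'(\widetilde M))^{T}$, $M\mapsto\overline{\beta'(\overline M)}$ are bijective $\BC$-linear ring homomorphisms from $\cl^\BC_{p,q}$ onto $\Mat(N,\BC)$ (resp.\ onto the block-diagonal subalgebra of two equal blocks when $n$ is odd): each is a composition of bijections, and multiplicativity of the first uses $\widehat{M_1M_2}=\widehat{M_1}\widehat{M_2}$, of the second uses $\widetilde{M_1M_2}=\widetilde{M_2}\widetilde{M_1}$ from (\ref{invol}) together with $(AB)^T=B^TA^T$ (the two order-reversals cancelling), and of the third uses $\overline{M_1M_2}=\overline{M_1}\,\overline{M_2}$ together with $\overline{AB}=\overline A\,\overline B$ (and the two conjugations compose to a $\BC$-linear map). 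Each is thus an isomorphism of type (\ref{isom}), so by Lemma~\ref{lemmawell} the rank evaluated through it equals $\rank(M)$; since transposition and entrywise conjugation leave matrix rank unchanged, these three ranks are $\rank(\beta'(\widehat M))=\rank(\widehat M)$, $\rank((\beta'(\widetilde M))^T)=\rank(\widetilde M)$ and $\rank(\overline{\beta'(\overline M)})=\rank(\beta'(\overline M))=\rank(\overline M)$, which completes the proof.

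The only delicate point --- where I expect the main, though minor, obstacle to lie --- is the bookkeeping that $M\mapsto(\beta'(\widetilde M))^T$ and $M\mapsto\overline{\beta'(\overline M)}$ genuinely land in an algebra of the exact shape demanded by Lemma~\ref{lemmawell}: one must track that the order-reversal of transposition cancels that of reversion, that the (only $\BR$-linear) algebra conjugation combines with matrix conjugation into a $\BC$-linear map, and that for odd $n$ the images stay block-diagonal with two equal blocks. Everything else is routine; alternatively one may bypass Lemma~\ref{lemmawell} and observe that each of these compositions differs from $\beta'$ by an inner automorphism of $\Mat(N,\BC)$ (Skolem--Noether), possibly preceded by an interchange of the two simple components when $n$ is odd, and both operations preserve rank.
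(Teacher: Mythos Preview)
Your proposal is correct and, for the $\dagger$-related identities, coincides with the paper's argument (both use (\ref{sogl}) and the classical identity $\rank(A^\dagger A)=\rank(A)$). For $\widehat M$, $\widetilde M$, $\overline M$, however, you take a different and arguably cleaner route. The paper handles each operation by an explicit relation: $\widehat M=e_{1\ldots n}Me_{1\ldots n}^{-1}$ when $n$ is even (falling back on the block computation inside Lemma~\ref{lemmawell} when $n$ is odd), and, for reversion, an imported formula $(\beta'(M))^T=\beta'(e_{b_1\ldots b_k}\widetilde M\,e_{b_1\ldots b_k}^{-1})$ (or its grade-involuted variant) from \cite{nspinors,LMA}; complex conjugation is then obtained as a composition. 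You instead package each operation into a new $\BC$-algebra isomorphism of type (\ref{isom}) and invoke Lemma~\ref{lemmawell} once per case, which is uniform in $n$ and avoids the external citation. The price is that the bookkeeping you flag --- that $M\mapsto(\beta'(\widetilde M))^T$ is a homomorphism (the two order-reversals cancel), that $M\mapsto\overline{\beta'(\overline M)}$ is $\BC$-linear (the two conjugate-linearities cancel), and that both stay in the block-diagonal subalgebra for odd $n$ --- must be written out; none of this is hard, and your sketch is accurate. One minor wording slip: in the odd-$n$ target the two diagonal blocks have equal \emph{size} but are not in general equal as matrices, so ``two equal blocks'' should read ``two blocks of equal size''.
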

\begin{proof} Let us prove $\rank(M)=\rank(\widehat{M})$. In the case of even $n$, we have $\rank(\widehat{M})=\rank(e_{1\ldots n}M e_{1\ldots n}^{-1})=\rank (M)$. In the case of odd $n$, we have already proved the statement in the proof of Lemma \ref{lemmawell}.

Let us prove $\rank(M)=\rank(\widetilde{M})$. We have the following relation between the reversion (or the superposition of reversion and grade involution) and the transpose (see \cite{nspinors,LMA}):
\begin{eqnarray}
(\beta'(M))^\T=\begin{cases}
\beta'(e_{b_1 \ldots b_k}\widetilde{M}e_{b_1\ldots b_k}^{-1}), & \mbox{if $k$ is odd,}\\
\beta'(e_{b_1 \ldots b_k}\widehat{\widetilde{M}}e_{b_1\ldots b_k}^{-1}), & \mbox{if $k$ is even,}
\end{cases}
\end{eqnarray}
for some fixed basis element $e_{b_1\ldots b_k}$, where $k$ is the number of symmetric matrices among $\beta'(e_a)$, $a=1, \ldots, n$. We get $\rank(M)=\rank(\beta'(M))=\rank((\beta'(M))^\T)=\rank(\widetilde{M})$.

Using (\ref{sogl}), we obtain the other formulas for the Hermitian conjugation and complex conjugation, which is a superposition of Hermitian conjugation and transpose.
\end{proof}

\begin{lemma}\label{lemmaB}
Suppose that a square matrix $A\in\BC^{N\times N}$ is diagonalizable. Then
\begin{eqnarray}
\rank(A)=N \quad &\Longleftrightarrow& \quad  C_{(N)}\neq 0;\\
 \rank(A)=k\in\{1, \ldots, N-1\} \quad &\Longleftrightarrow& \quad  C_{(k)}\neq 0,\quad C_{(j)}=0,\quad j=k+1, \ldots, N;\\
\rank(A)=0 \quad &\Longleftrightarrow& \quad  A=0.
\end{eqnarray}
\end{lemma}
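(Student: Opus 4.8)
The plan is to pass from the matrix to its spectrum. Since $A$ is diagonalizable, it is similar to $\diag(\mu_1, \ldots, \mu_N)$, where $\mu_1, \ldots, \mu_N$ are the eigenvalues listed with multiplicity, and since rank is invariant under similarity, $\rank(A)$ equals the number of nonzero $\mu_i$. Put $r:=\rank(A)$ and reorder so that $\mu_1, \ldots, \mu_r$ are nonzero and $\mu_{r+1}=\cdots=\mu_N=0$.

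Next I would identify the coefficients $C_{(k)}$ with elementary symmetric functions of the eigenvalues. Factoring $\varphi_M(\lambda)=\Det(\lambda e-A)=\prod_{i=1}^N(\lambda-\mu_i)$ and comparing the coefficient of $\lambda^{N-k}$ with (\ref{char}), one gets that $C_{(k)}$ equals, up to a sign, the $k$-th elementary symmetric polynomial $e_k(\mu_1, \ldots, \mu_N)$; in particular $C_{(N)}$ is, up to sign, $\Det(A)=\mu_1\cdots\mu_N$, consistent with (\ref{laststep}).

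The heart of the argument is then the zero pattern of the $\mu_i$. For any $k>r$, each monomial of $e_k(\mu_1, \ldots, \mu_N)$ contains at least one factor $\mu_i$ with $i>r$, hence $C_{(k)}=0$; for $k=r\geq 1$, the only surviving monomial of $e_r$ is $\mu_1\cdots\mu_r\neq 0$, hence $C_{(r)}\neq 0$. This establishes the three forward implications: $\rank(A)=N$ forces $C_{(N)}\neq 0$; $\rank(A)=k\in\{1, \ldots, N-1\}$ forces $C_{(k)}\neq 0$ and $C_{(j)}=0$ for $j=k+1, \ldots, N$; and $\rank(A)=0$ means $A$ is similar to the zero matrix, i.e. $A=0$.

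For the converse implications I would argue by contradiction from the forward ones. If $A\neq 0$, then $r:=\rank(A)\geq 1$, and the forward direction shows that among $C_{(1)}, \ldots, C_{(N)}$ the coefficient $C_{(r)}$ is nonzero while every $C_{(j)}$ with $j>r$ vanishes; thus the index of the topmost nonzero coefficient is exactly $\rank(A)$, which pins down $r$ from any hypothesis of the stated form, while the hypothesis $A=0$ is trivially equivalent to $\rank(A)=0$. I do not anticipate a genuine obstacle here; the one point requiring care is that the forward direction delivers the sharp conclusion $C_{(r)}\neq 0$, not merely the vanishing of the higher coefficients — and this sharpness is precisely where the diagonalizability hypothesis is used, since a nonzero nilpotent matrix has positive rank yet all $C_{(k)}=0$.
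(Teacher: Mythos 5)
Your proposal is correct and follows essentially the same route as the paper: it uses the Vieta formulas (elementary symmetric functions of the eigenvalues), the fact that for a diagonalizable matrix the rank is the number of nonzero eigenvalues, and proves the converse implications by the same contradiction/exclusion argument the paper invokes. Your write-up merely spells out the details the paper calls ``obvious,'' including the key observation that $C_{(r)}=\pm\mu_1\cdots\mu_r\neq 0$ and the role of diagonalizability in ruling out the nilpotent counterexample.
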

\begin{proof}
    We use Vieta formulas for the eigenvalues $\lambda_1, \lambda_2, \ldots, \lambda_N$:
    \begin{eqnarray}
    C_{(1)}&=&\lambda_1+\cdots+\lambda_N,\\
    C_{(2)}&=&-(\lambda_1 \lambda_2+\lambda_1 \lambda_3+\cdots+\lambda_{N-1}\lambda_N),\\
   && \cdots\\
    C_{(N)}&=&-\lambda_1 \cdots \lambda_N.
    \end{eqnarray}
 To the right, all statements are obvious. To the left, they are proved by contradiction.   
\end{proof}

\begin{lemma}\label{lemmaC}
For an arbitrary multivector $M\in\cl^\BC_{p,q}$, we have
\begin{eqnarray}
    C_{(N)}(M^\dagger M)=0 &\Longleftrightarrow& C_{(N)}(M)=0,\\
    C_{(1)}(M^\dagger M)=0 &\Longleftrightarrow& M=0.
\end{eqnarray}
\end{lemma}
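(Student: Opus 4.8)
The plan is to identify $C_{(N)}$ with (minus) the determinant and $C_{(1)}$ with the trace, and then reduce each equivalence to an elementary positivity fact. For the computations it is convenient to use the fixed faithful representation $\beta'$ of Section~\ref{secbeta}, which satisfies $\beta'(M^\dagger)=(\beta'(M))^\dagger$ by (\ref{sogl}); this is harmless because, by the Faddeev--LeVerrier formulas (\ref{FL0})--(\ref{FL}), each coefficient $C_{(k)}(X)$ depends only on $X\in\cl^\BC_{p,q}$ and not on the chosen representation (equivalently, $\Det(X)=\det(\beta(X))$ is the same for all faithful $\beta$, as one sees from the Pauli theorem and the block structure of $\beta'$ exactly as in the proof of Lemma~\ref{lemmawell}).

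First I would treat $C_{(N)}(M^\dagger M)=0\Leftrightarrow C_{(N)}(M)=0$. Since $C_{(N)}(X)=-\Det(X)$, property (\ref{detpr}) gives $\Det(M^\dagger M)=\Det(M^\dagger)\Det(M)$, which by (\ref{detpr2}) equals $\Det(M)^2$; equivalently, writing $A:=\beta'(M)$ one has $\beta'(M^\dagger M)=A^\dagger A$ and $\det(A^\dagger A)=\overline{\det A}\,\det A=|\det A|^2$. In either form, $C_{(N)}(M^\dagger M)$ vanishes precisely when $\Det(M)=\det A=0$, i.e. precisely when $C_{(N)}(M)=0$. The same conclusion can be obtained purely from geometric algebra operations via the SVD of Theorem~\ref{th1}: writing $M=U\Sigma V^\dagger$ with $U,V\in\U\cl^\BC_{p,q}$ one gets $M^\dagger M=V\Sigma^\dagger\Sigma V^\dagger$, and since $\Det(U)\Det(U^\dagger)=\Det(V)\Det(V^\dagger)=\Det(e)=1$, both $\Det(M)=\Det(U)\Det(\Sigma)\Det(V^\dagger)$ and $\Det(M^\dagger M)=\Det(\Sigma^\dagger\Sigma)$ vanish exactly when some singular value of $M$ is zero.

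Next, for $C_{(1)}(M^\dagger M)=0\Leftrightarrow M=0$, I would use the positive scalar product of Section~\ref{secbeta}. By (\ref{FL}), $C_{(1)}(M^\dagger M)=N\langle M^\dagger M\rangle_0=N\,(M,M)=N\,||M||^2$, where the last two equalities are the definitions of the scalar product and the norm (\ref{norm}); by property (\ref{||M||}) this quantity is nonnegative and equals $0$ if and only if $M=0$. In matrix language this is just $C_{(1)}(M^\dagger M)=\tr(A^\dagger A)=\sum_{i,j}|A_{ij}|^2$, the squared Frobenius norm of $A=\beta'(M)$, which is zero iff $A=0$, hence iff $M=0$ by faithfulness of $\beta'$.

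No genuine obstacle is expected: the only point requiring care is to keep the chosen matrix representation compatible with Hermitian conjugation (which $\beta'$ is, by (\ref{sogl})), or else to route the determinant argument through the SVD of Theorem~\ref{th1}; everything else reduces to the standard identities $\det(A^\dagger A)=|\det A|^2$ and $\tr(A^\dagger A)=||A||_F^2$ together with the positive-definiteness of the scalar product $(\cdot,\cdot)$.
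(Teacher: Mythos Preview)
Your proposal is correct and follows essentially the same route as the paper: both arguments reduce the first equivalence to $\Det(M^\dagger M)=\Det(M^\dagger)\Det(M)$ via (\ref{detpr}) and (\ref{detpr2}), and the second to $C_{(1)}(M^\dagger M)=N\langle M^\dagger M\rangle_0=N\,||M||^2$ via (\ref{norm}) and (\ref{||M||}). Your additional remarks (the matrix reformulations through $\beta'$ and the SVD alternative) are sound but go beyond what the paper needs or records.
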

\begin{proof} We have
\begin{eqnarray*}
C_{(N)}(M^\dagger M)&=&-\Det(M^\dagger M)=-\Det(M^\dagger) \Det(M)=-(\Det M)^2=(C_{(N)}(M))^2,\\
C_{(1)}(M^\dagger M)&=&N \langle M^\dagger M \rangle_0=N \|M\|^2,
\end{eqnarray*}
where we use (\ref{detpr}), (\ref{detpr2}), (\ref{norm}), and (\ref{||M||}).
\end{proof}

\begin{theorem}[Rank in GA]\label{thrank} Let us consider an arbitrary multivector $M\in\cl^\BC_{p,q}$ and $T:=M^\dagger M$. We have
\begin{eqnarray}
\rank(M)=\begin{cases}
N,\quad &\mbox{if $C_{(N)}(M)\neq 0$,}\\
N-1,\quad &\mbox{if $C_{(N)}(M)=0$ and $C_{(N-1)}(T)\neq 0$,}\\
N-2\qquad &\mbox{if $C_{(N)}(M)=C_{(N-1)}(T)=0$ and $C_{(N-2)}(T)\neq 0$,}\\
\cdots  &\\
2,\quad  &\mbox{if $C_{(N)}(M)=C_{(N-1)}(T)=\cdots=C_{(3)}(T)=0$ and $C_{(2)}(T)\neq 0$,}\\
1,\quad  &\mbox{if $C_{(N)}(M)=C_{(N-1)}(T)=\cdots=C_{(2)}(T)=0$ and $M\neq 0$,}\\
0,\quad  &\mbox{if $M=0$.}\label{rank22}
\end{cases}
\end{eqnarray}
\end{theorem}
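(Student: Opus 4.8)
The plan is to reduce the statement to the matrix case via the faithful representation $\beta'$ and then apply Lemmas~\ref{lemmaB} and~\ref{lemmaC}. First I would observe that, by Lemma~\ref{lemmawell} and the definition~(\ref{rank}), it suffices to work with the fixed representation $\beta'$, so that $\rank(M)=\rank(\beta'(M))$, and by Theorem~\ref{thrankpr2} we have $\rank(M)=\rank(T)=\rank(\beta'(T))$ where $T=M^\dagger M$. The crucial point is that $\beta'(T)=(\beta'(M))^\dagger\beta'(M)$ by~(\ref{sogl}), so $\beta'(T)$ is a Hermitian positive semidefinite matrix; in particular it is diagonalizable, so Lemma~\ref{lemmaB} applies to $A=\beta'(T)$. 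Thus $\rank(T)=k$ for $k\in\{1,\dots,N\}$ is equivalent to $C_{(k)}(T)\neq 0$ and $C_{(j)}(T)=0$ for $j>k$, and $\rank(T)=0$ is equivalent to $T=0$, which by Lemma~\ref{lemmaC} (second equivalence) is equivalent to $M=0$.

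Next I would glue this together with the refinements coming from Lemma~\ref{lemmaC}. For the top case, $\rank(M)=N$ iff $C_{(N)}(T)\neq 0$; but by the first equivalence in Lemma~\ref{lemmaC}, $C_{(N)}(T)\neq 0$ iff $C_{(N)}(M)\neq 0$, which is why the statement phrases the $N$-case and the "first vanishing condition" in terms of $C_{(N)}(M)$ rather than $C_{(N)}(T)$. For the case $\rank(M)=N-1$: this is equivalent to $C_{(N)}(T)=0$ and $C_{(N-1)}(T)\neq 0$, and again $C_{(N)}(T)=0\Leftrightarrow C_{(N)}(M)=0$, giving exactly the displayed condition. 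The intermediate cases $\rank(M)=k$ for $2\le k\le N-2$ are immediate from Lemma~\ref{lemmaB} applied to $\beta'(T)$, rewriting only the leading vanishing $C_{(N)}(T)=0$ as $C_{(N)}(M)=0$. The case $\rank(M)=1$ corresponds to $C_{(2)}(T)=\cdots=C_{(N)}(T)=0$ together with $T\neq 0$, and $T\neq 0\Leftrightarrow M\neq 0$ by Lemma~\ref{lemmaC}; note one should check $C_{(1)}(T)\neq 0$ is automatic here since $C_{(1)}(T)=N\|M\|^2\neq 0$ when $M\neq 0$, consistent with Lemma~\ref{lemmaB}. Finally $\rank(M)=0\Leftrightarrow M=0$.

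The main obstacle — really the only subtle point — is justifying that Lemma~\ref{lemmaB} is applicable, i.e.\ that $\beta'(T)$ is genuinely diagonalizable. This rests on~(\ref{sogl}), which holds for $\beta'$ but (as the text emphasizes) not for an arbitrary $\beta$; so the proof must be careful to pass to $\beta'$ first (legitimate by Lemma~\ref{lemmawell}) before invoking the Hermitian-matrix structure. Once that is in place, everything else is bookkeeping: matching the chain of conditions $C_{(N)}(T)=0,\,C_{(N-1)}(T)=0,\dots$ against the displayed chain, and substituting $C_{(N)}(M)$ for $C_{(N)}(T)$ and $M=0$ for $T=0$ using Lemma~\ref{lemmaC}.

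I would therefore write the proof as: (1) reduce to $\beta'$ via Lemma~\ref{lemmawell}; (2) use Theorem~\ref{thrankpr2} to replace $M$ by $T=M^\dagger M$ without changing the rank; (3) note $\beta'(T)=(\beta'(M))^\dagger\beta'(M)\succeq 0$ is Hermitian hence diagonalizable; (4) apply Lemma~\ref{lemmaB} to $\beta'(T)$; (5) use Lemma~\ref{lemmaC} to restate the boundary conditions ($C_{(N)}(T)=0\Leftrightarrow C_{(N)}(M)=0$, $T=0\Leftrightarrow M=0$), obtaining~(\ref{rank22}).
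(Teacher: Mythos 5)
Your proposal is correct and follows essentially the same route as the paper: both reduce to the characteristic polynomial of $T=M^\dagger M$, apply Lemma~\ref{lemmaB} to the diagonalizable matrix representing $T$, and use Lemma~\ref{lemmaC} to rewrite the boundary conditions $C_{(N)}(T)=0\Leftrightarrow C_{(N)}(M)=0$ and $T=0\Leftrightarrow M=0$. The only cosmetic difference is that the paper phrases the first step via the SVD (rank equals the number of nonzero singular values, i.e.\ nonzero eigenvalues of $A^\dagger A$) while you invoke Theorem~\ref{thrankpr2} directly; your explicit remark that one must pass to the fixed representation $\beta'$ so that (\ref{sogl}) guarantees $\beta'(T)$ is Hermitian is a worthwhile point the paper leaves implicit.
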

\begin{proof}  We use the fact that the rank of a matrix equals the number of non-zero singular values, which is the same as the number of non-zero diagonal elements of the matrix $\Sigma$ in the singular value decomposition $A=U\Sigma V^\dagger$ (\ref{SVD}): $\rank(A)=\rank(U\Sigma V^\dagger)=\rank(\Sigma)$. The number of non-zero diagonal elements of the matrix $\Sigma$ can be written in terms of zero and non-zero characteristic polynomial coefficients of the matrix $A^\dagger A$ (see Lemma \ref{lemmaB}). Then we use Lemma~\ref{lemmaC}. 
\end{proof}

\begin{example}
For an arbitrary $M\in\cl^\BC_{p,q}$, $n=p+q=1$, we have
\begin{eqnarray}
\rank(M)=\begin{cases}
$2,\quad$ &\mbox{if $M\widehat{M}\neq 0$,}\\
$1,\quad$  &\mbox{if $M\widehat{M}=0$ and $M\neq 0$,}\\
$0,\quad$  &\mbox{if $M=0$.}
\end{cases}\label{exp-1}
\end{eqnarray}
\end{example}

\begin{example}
For an arbitrary $M\in\cl^\BC_{p,q}$, $n=p+q=2$, we have
\begin{eqnarray}
\rank(M)=\begin{cases}
$2,\quad$ &\mbox{if $M\widetilde{\widehat{M}}\neq 0$,}\\
$1,\quad$  &\mbox{if $M\widetilde{\widehat{M}}=0$ and $M\neq 0$,}\\
$0,\quad$  &\mbox{if $M=0$.}
\end{cases}\label{exp0}
\end{eqnarray}
\end{example}

\begin{example}
For an arbitrary $M\in\cl^\BC_{p,q}$, $n=p+q=3$, we have
\begin{eqnarray}
\rank(M)=\begin{cases}
$4,\quad$ &\mbox{if $M\widetilde{\widehat{M}}\widehat{M}\widetilde{M}\neq 0$,}\\
$3,\quad$ &\mbox{if $M\widetilde{\widehat{M}}\widehat{M}\widetilde{M}=0$ and $T\widetilde{\widehat{T}}\widehat{T}+T\widetilde{\widehat{T}}\widetilde{T}+T\widehat{T}\widetilde{T}+\widetilde{\widehat{T}}\widehat{T}\widetilde{T}\neq 0$,}\\
$2,\quad$ &\mbox{if $M\widetilde{\widehat{M}}\widehat{M}\widetilde{M}=T\widetilde{\widehat{T}}\widehat{T}+T\widetilde{\widehat{T}}\widetilde{T}+T\widehat{T}\widetilde{T}+\widetilde{\widehat{T}}\widehat{T}\widetilde{T}=0$ and} \\ &\mbox{$T\widetilde{\widehat{T}}+T\widehat{T}+T\widetilde{T}+\widetilde{\widehat{T}}\widehat{T}+\widetilde{\widehat{T}}\widetilde{T}+\widehat{T}\widetilde{T}\neq 0$,}\\
$1,\quad$  &\mbox{if $M\widetilde{\widehat{M}}\widehat{M}\widetilde{M}=T\widetilde{\widehat{T}}\widehat{T}+T\widetilde{\widehat{T}}\widetilde{T}+T\widehat{T}\widetilde{T}+\widetilde{\widehat{T}}\widehat{T}\widetilde{T}=$}\\
&\mbox{$=T\widetilde{\widehat{T}}+T\widehat{T}+T\widetilde{T}+\widetilde{\widehat{T}}\widehat{T}+\widetilde{\widehat{T}}\widetilde{T}+\widehat{T}\widetilde{T}=0$ and $M\neq 0$,}\\
$0,\quad$  &\mbox{if $M=0$,}
\end{cases}\label{exp1}
\end{eqnarray}
where $T:=M^\dagger M$.
\end{example}

\begin{example} Let us consider the $\bigtriangleup$-operation \cite{det}
\begin{eqnarray}
M^{\bigtriangleup}:=\sum_{k=0}^n (-1)^{\frac{k(k-1)(k-2)(k-3)}{24}}\langle M \rangle_k=\sum_{k=0, 1, 2, 3\mod 8}\langle M \rangle_k-\sum_{k=4, 5, 6, 7\mod 8}\langle M \rangle_k.\label{opconj}
\end{eqnarray}
Note that we have $(M_1 M_2)^\bigtriangleup\neq M_1^\bigtriangleup M_2^\bigtriangleup$ and $(M_1 M_2)^\bigtriangleup\neq M_2^\bigtriangleup M_1^\bigtriangleup $ 
in the general case.

For an arbitrary $M\in\cl^\BC_{p,q}$, $n=p+q=4$, we have
\begin{eqnarray}
\rank(M)=\begin{cases}
$4,$ &\mbox{if $M\widetilde{\widehat{M}}(\widehat{M}\widetilde{M})^\bigtriangleup\neq 0$,}\\
$3,$ &\mbox{if $M\widetilde{\widehat{M}}(\widehat{M}\widetilde{M})^\bigtriangleup=0$ and} \\
&\mbox{$T\widetilde{\widehat{T}}\widehat{T}+T\widetilde{\widehat{T}}\widetilde{T}+T(\widehat{T}\widetilde{T})^\bigtriangleup+\widetilde{\widehat{T}}(\widehat{T}\widetilde{T})^\bigtriangleup\neq 0$,}\\
$2,$ &\mbox{if $M\widetilde{\widehat{M}}(\widehat{M}\widetilde{M})^\bigtriangleup=T\widetilde{\widehat{T}}\widehat{T}+T\widetilde{\widehat{T}}\widetilde{T}+T(\widehat{T}\widetilde{T})^\bigtriangleup+\widetilde{\widehat{T}}(\widehat{T}\widetilde{T})^\bigtriangleup=0$} \\ &\mbox{and $T\widetilde{\widehat{T}}+T\widehat{T}+T\widetilde{T}+\widetilde{\widehat{T}}\widehat{T}+\widetilde{\widehat{T}}\widetilde{T}+(\widehat{T}\widetilde{T})^\bigtriangleup\neq 0$,}\\
$1,$  &\mbox{if $M\widetilde{\widehat{M}}(\widehat{M}\widetilde{M})^\bigtriangleup=T\widetilde{\widehat{T}}\widehat{T}+T\widetilde{\widehat{T}}\widetilde{T}+T(\widehat{T}\widetilde{T})^\bigtriangleup+\widetilde{\widehat{T}}(\widehat{T}\widetilde{T})^\bigtriangleup=$}\\
&\mbox{$=T\widetilde{\widehat{T}}+T\widehat{T}+T\widetilde{T}+\widetilde{\widehat{T}}\widehat{T}+\widetilde{\widehat{T}}\widetilde{T}+(\widehat{T}\widetilde{T})^\bigtriangleup=0$ and $M\neq 0$,}\\
$0,$  &\mbox{if $M=0$,}
\end{cases}\label{exp2}
\end{eqnarray}
where $T:=M^\dagger M$.
\end{example}

You can get an explicit form of formulas from Theorem \ref{thrank} for the cases $n=5$ and $n=6$ using the explicit formulas for the characteristic polynomial coefficients $C_{(k)}$, $k=1, 2, \dots, N$, from \cite{Abd2}. We do not present them here because they are quite cumbersome. These formulas involve only the operations of summation, multiplication, $\widehat{\quad}$, $\widetilde{\quad}$, and $\,\,^\bigtriangleup$.

\section{The Case of Normal Multivectors}\label{secRankherm}

We call \textit{a normal multivector} $M\in\cl^\BC_{p,q}$ a multivector with the property $M^\dagger M=M M^\dagger$, where $\dagger$ is (\ref{herm}).

\textit{Hermitian multivectors} $M^\dagger=M$, \textit{anti-Hermitian multivectors} $M^\dagger=-M$, \textit{unitary multivectors }$M^\dagger M=e$ are the particular cases of normal multivectors. For example, the basis elements $e_A$ of $\cl_{p,q}$ are unitary by the definition. Note that all unitary multivectors have rank equal to $N$.

\begin{theorem}\label{thrankherm} Let us consider a normal ($M^\dagger M=M M^\dagger$) multivector $M\in\cl^\BC_{p,q}$. We have
\begin{eqnarray}
\rank(M)=\begin{cases}
N,\quad &\mbox{if $C_{(N)}(M)\neq 0$,}\\
N-1,\quad &\mbox{if $C_{(N)}(M)=0$ and $C_{(N-1)}(M)\neq 0$,}\\
N-2\qquad &\mbox{if $C_{(N)}(M)=C_{(N-1)}(M)=0$ and $C_{(N-2)}(M)\neq 0$,}\\
\cdots  &\\
2,\quad  &\mbox{if $C_{(N)}(M)=C_{(N-1)}(M)=\cdots=C_{(3)}(M)=0$ and $C_{(2)}(M)\neq 0$,}\\
1,\quad  &\mbox{if $C_{(N)}(M)=C_{(N-1)}(M)=\cdots=C_{(2)}(M)=0$ and $M\neq 0$,}\\
0,\quad  &\mbox{if $M=0$.}\label{rankherm}
\end{cases}
\end{eqnarray}
\end{theorem}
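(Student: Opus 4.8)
The plan is to reduce the statement to the matrix case and then apply the analogue of Lemma~\ref{lemmaB}. First I would fix a matrix representation $\beta$ of the form (\ref{isom}) for which $\beta(M^\dagger)=(\beta(M))^\dagger$ holds --- for instance $\beta'$ from Section~\ref{secbeta} --- and set $A:=\beta(M)$. Since $M$ is normal, $M^\dagger M=MM^\dagger$ gives $A^\dagger A=AA^\dagger$, so $A$ is a normal complex matrix. By the spectral theorem a normal matrix is unitarily diagonalizable, hence in particular diagonalizable, so the hypotheses of Lemma~\ref{lemmaB} apply to $A$.

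Next I would invoke Lemma~\ref{lemmaB} directly for $A$: the rank of $A$ equals the largest index $k$ with $C_{(k)}(A)\neq 0$ (and $0$ if $A=0$). Since $\Det$ and all the Faddeev--LeVerrier coefficients $C_{(k)}$ are defined through $\beta$ and are independent of the choice of representation (this is exactly the content built into (\ref{char}), and parallels Lemma~\ref{lemmawell}), we have $C_{(k)}(A)=C_{(k)}(M)$ for all $k$. Therefore $\rank(M)=\rank(A)$ is governed by the vanishing pattern of $C_{(N)}(M), C_{(N-1)}(M), \ldots, C_{(2)}(M)$ exactly as displayed in (\ref{rankherm}), with the top case $\rank(M)=N$ corresponding to $C_{(N)}(M)=-\Det(M)\neq 0$ and the bottom case $M=0$ to $\rank(M)=0$. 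This already yields all lines of the claimed formula except that the line $\rank(M)=1$ must be phrased using ``$M\neq 0$'' rather than ``$C_{(1)}(M)\neq0$''.

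The only subtlety --- and the step I expect to require the most care --- is the $\rank(M)=1$ line: I need that for a normal multivector, $C_{(N)}(M)=\cdots=C_{(2)}(M)=0$ together with $M\neq 0$ forces $\rank(M)=1$, i.e.\ that these conditions cannot hold with $\rank(M)=0$, but that case is just $M=0$, which is excluded. In the matrix picture: if $A$ is normal and $C_{(2)}(A)=\cdots=C_{(N)}(A)=0$, then by Vieta's formulas at most one eigenvalue of $A$ is nonzero, so $\rank(A)\le 1$; and $\rank(A)=0$ would mean $A$ nilpotent and normal, hence $A=0$, i.e.\ $M=0$. So under $M\neq 0$ we get $\rank(M)=1$. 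The key point making the whole argument work --- and the reason the conditions are stated on $C_{(k)}(M)$ itself rather than on $C_{(k)}(T)$ with $T=M^\dagger M$ as in Theorem~\ref{thrank} --- is precisely normality: for a normal matrix the singular values are the absolute values of the eigenvalues, so the number of nonzero singular values (the rank) coincides with the number of nonzero eigenvalues, which is read off directly from the vanishing pattern of the $C_{(k)}(M)$. I would state this observation explicitly as the hinge of the proof before concluding.

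Finally I would note that this gives a basis-free criterion: combining the Faddeev--LeVerrier recursion (\ref{FL0})--(\ref{FL}) with (\ref{rankherm}) expresses $\rank(M)$ for a normal $M$ purely in terms of geometric algebra operations, and it is computationally cheaper than Theorem~\ref{thrank} since one never needs to form $T=M^\dagger M$ or its characteristic coefficients.
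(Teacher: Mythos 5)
Your proof is correct and follows essentially the same route as the paper, whose entire proof is the single remark that a normal matrix is always diagonalizable, so that Lemma~\ref{lemmaB} applies directly to $\beta'(M)$ (using $\beta'(M^\dagger)=(\beta'(M))^\dagger$ from (\ref{sogl})) instead of to $\beta'(M^\dagger M)$. Your additional care with the $\rank(M)=1$ line --- observing that a normal matrix with all eigenvalues zero must itself be zero, so ``$M\neq 0$'' may replace ``$C_{(1)}(M)\neq 0$'' --- is a worthwhile detail the paper leaves implicit.
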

\begin{proof}  We use that a normal matrix is always diagonalizable. 
\end{proof}

\begin{example}
For an arbitrary normal ($M^\dagger M=M M^\dagger$) multivector $M\in\cl^\BC_{p,q}$, $n=p+q=3$, we have
\begin{eqnarray}
\rank(M)=\begin{cases}
$4,\quad$ &\mbox{if $M\widetilde{\widehat{M}}\widehat{M}\widetilde{M}\neq 0$,}\\
$3,\quad$ &\mbox{if $M\widetilde{\widehat{M}}\widehat{M}\widetilde{M}=0$ and $M\widetilde{\widehat{M}}\widehat{M}+M\widetilde{\widehat{M}}\widetilde{M}+M\widehat{M}\widetilde{M}+\widetilde{\widehat{M}}\widehat{M}\widetilde{M}\neq 0$,}\\
$2,\quad$ &\mbox{if $M\widetilde{\widehat{M}}\widehat{M}\widetilde{M}=M\widetilde{\widehat{M}}\widehat{M}+M\widetilde{\widehat{M}}\widetilde{M}+M\widehat{M}\widetilde{M}+\widetilde{\widehat{M}}\widehat{M}\widetilde{M}=0$ and} \\ &\mbox{$M\widetilde{\widehat{M}}+M\widehat{M}+M\widetilde{M}+\widetilde{\widehat{M}}\widehat{M}+\widetilde{\widehat{M}}\widetilde{M}+\widehat{M}\widetilde{M}\neq 0$,}\\
$1,\quad$  &\mbox{if $M\widetilde{\widehat{M}}\widehat{M}\widetilde{M}=M\widetilde{\widehat{M}}\widehat{M}+M\widetilde{\widehat{M}}\widetilde{M}+M\widehat{M}\widetilde{M}+\widetilde{\widehat{M}}\widehat{M}\widetilde{M}=$}\\
&\mbox{$=M\widetilde{\widehat{M}}+M\widehat{M}+M\widetilde{M}+\widetilde{\widehat{M}}\widehat{M}+\widetilde{\widehat{M}}\widetilde{M}+\widehat{M}\widetilde{M}=0$ and $M\neq 0$,}\\
$0,\quad$  &\mbox{if $M=0$.}
\end{cases}\label{exp12}
\end{eqnarray}
\end{example}

\begin{example} 
For an arbitrary normal ($M^\dagger M=M M^\dagger$) multivector $M\in\cl^\BC_{p,q}$, $n=p+q=4$, we have
\begin{eqnarray}
\rank(M)=\begin{cases}
$4,$ &\mbox{if $M\widetilde{\widehat{M}}(\widehat{M}\widetilde{M})^\bigtriangleup\neq 0$,}\\
$3,$ &\mbox{if $M\widetilde{\widehat{M}}(\widehat{M}\widetilde{M})^\bigtriangleup=0$ and} \\
&\mbox{$M\widetilde{\widehat{M}}\widehat{M}+M\widetilde{\widehat{M}}\widetilde{M}+M(\widehat{M}\widetilde{M})^\bigtriangleup+\widetilde{\widehat{M}}(\widehat{M}\widetilde{M})^\bigtriangleup\neq 0$,}\\
$2,$ &\mbox{if $M\widetilde{\widehat{M}}(\widehat{M}\widetilde{M})^\bigtriangleup=M\widetilde{\widehat{M}}\widehat{M}+M\widetilde{\widehat{M}}\widetilde{M}+M(\widehat{M}\widetilde{M})^\bigtriangleup+\widetilde{\widehat{M}}(\widehat{M}\widetilde{M})^\bigtriangleup=0$} \\ &\mbox{and $M\widetilde{\widehat{M}}+M\widehat{M}+M\widetilde{M}+\widetilde{\widehat{M}}\widehat{M}+\widetilde{\widehat{M}}\widetilde{M}+(\widehat{M}\widetilde{M})^\bigtriangleup\neq 0$,}\\
$1,$  &\mbox{if $M\widetilde{\widehat{M}}(\widehat{M}\widetilde{M})^\bigtriangleup=M\widetilde{\widehat{M}}\widehat{M}+M\widetilde{\widehat{M}}\widetilde{M}+M(\widehat{M}\widetilde{M})^\bigtriangleup+\widetilde{\widehat{M}}(\widehat{M}\widetilde{M})^\bigtriangleup=$}\\
&\mbox{$=M\widetilde{\widehat{M}}+M\widehat{M}+M\widetilde{M}+\widetilde{\widehat{M}}\widehat{M}+\widetilde{\widehat{M}}\widetilde{M}+(\widehat{M}\widetilde{M})^\bigtriangleup=0$ and $M\neq 0$,}\\
$0,$  &\mbox{if $M=0$.}
\end{cases}\label{exp22}
\end{eqnarray}
\end{example}

Note that formulas (\ref{rankherm}), (\ref{exp12}), (\ref{exp22}) differ from (\ref{rank22}), (\ref{exp1}), (\ref{exp2}), respectively, by replacing the multivector $T=M^\dagger M$ with the multivector $M$.

\section{Conclusions}\label{secConcl}

In this paper, we establish the notion of rank of multivector in complexified Clifford geometric algebras without using the corresponding matrix representations. Theorem \ref{thrank} involves only operations in geometric algebras. To obtain the results, we use the fact that rank of the matrix $A\in\Mat(N, \BC)$ is the number of nonzero (positive) singular values of $A$, that is, the number of nonzero (positive) eigenvalues of the matrix $A^\dagger A$. Theorems \ref{thrankpr}, \ref{thrankpr2}, \ref{thrank}, \ref{thrankherm} are new. New explicit formulas (\ref{exp1}), (\ref{exp2}) for the cases of dimensions $3$ and $4$ can be used in various applications of geometric algebras in physics, engineering, and computer science.

Note that the results of this work are valid not only for complexified Clifford geometric algebras, but also for real Clifford geometric algebras, since we can use the same matrix representations in the real case (but these matrix representations will have non-minimal dimension in this case).

Studying the relationship between rank of multivector and other classical matrix concepts related to rank, such as rows and columns, minors, and row echelon form in terms of geometric algebra operations without using the corresponding matrix representation is an interesting task for further research.

\begin{ack}[Acknowledgements]
The author is grateful to the anonymous reviewers for their careful reading of the paper and helpful comments on how to improve the presentation.
\end{ack}

\begin{ack}[Conflict of interest]
This work does not have any conflicts of interest.
\end{ack}

\bigskip 

Data sharing not applicable to this article as no datasets were generated or analysed during the current study.

\end{document}